\providecommand{\U}[1]{\protect\rule{.1in}{.1in}}
\pgfplotsset{compat=1.8}
\renewcommand\and{\end{tabular}\kern-\tabcolsep\ and\ \kern-\tabcolsep\begin{tabular}[t]{c}}
\let\origthanks\thanks
\renewcommand\thanks[1]{\begingroup\let\rlap\relax\origthanks{#1}\endgroup}
\newtheorem{theorem}{Theorem}
\newtheorem{example}[theorem]{Example}
\newtheorem{assumption}{Assumption}
\newtheorem{proposition}[theorem]{Proposition}
\newcommand{\quotes}[1]{``#1''}
\begin{document}

\title{Predicting Dynamics on Networks Hardly Depends on the Topology}
\author{Bastian Prasse\thanks{Faculty of Electrical Engineering, Mathematics and
Computer Science, P.O Box 5031, 2600 GA Delft, The Netherlands; \emph{email}:
b.prasse@tudelft.nl, p.f.a.vanmieghem@tudelft.nl} \and Piet Van Mieghem\footnotemark[1]}
\date{Delft University of Technology\\
May 29, 2020}
\maketitle
\begin{abstract}
Processes on networks consist of two interdependent parts: the network topology, consisting of the links between nodes, and the dynamics, specified by some governing equations. This work considers the prediction of the future dynamics on an unknown network, based on past observations of the dynamics. For a general class of governing equations, we propose a prediction algorithm which infers the network as an intermediate step. Inferring the network is impossible in practice, due to a dramatically ill-conditioned linear system. Surprisingly, a highly accurate prediction of the dynamics is possible nonetheless: Even though the inferred network has no topological similarity with the true network, both networks result in practically the same future dynamics.
\end{abstract}

\section{Introduction}
The interplay of dynamics and structure lies at the heart of myriad processes on networks, ranging from predator-prey interactions on ecological networks \cite{may2001stability} and epidemic outbreaks on physical contact networks \cite{pastor2015epidemic} to brain activity on neural networks \cite{cabral2014exploring}. To relate the network structure and the process dynamics, there are two approaches of opposing directions. On the one hand, a great body of research \cite{boccaletti2006complex, barrat2008dynamical, porter2016dynamical} focusses on the question: \textit{What is the impact of the network structure on the dynamics of a process?} For instance, the impact of the network of online social media friendships on the spread of fake news. On the other hand, network reconstruction methods \cite{timme2014revealing, wang2016data, newman2018network, peixoto2019network} consider the inverse problem: \textit{Given some observations of dynamics, what can we infer about the network structure?} As an example, one may ask to determine the path of an infectious virus from one individual to another, given observations of the epidemic outbreak. 

The prediction of dynamics on an \textit{unknown} network seems to require the combination of both directions: first, the reconstruction of the network structure based on past observations of the dynamics and, second, the estimation of the future dynamics based on the inferred (i.e. reconstructed) network. Intuitively, one may expect that an accurate prediction of the dynamics is possible only if the network reconstruction is accurate. In this work, paradoxically, we show the contrary: it is possible to accurately predict a general class of dynamics without the network structure. 

\section{Modelling Dynamics on Networks} \label{sec:dynmical_models}
The network is represented by the $N \times N$ weighted adjacency matrix~$A$ whose elements are denoted by~$a_{ij}$. If there is a directed link from node~$j$ to node~$i$, then it holds that $a_{ij}>0$, and $a_{ij}=0$ otherwise. Throughout this work, we make a clear distinction between the \textit{network topology} and the \textit{interaction strengths} \cite{barrat2004architecture}. The network topology, or network structure, is the set of all links: all node pairs $(i,j)$ for which $a_{ij}>0$. If there is a link from node $j$ to node $i$, then the interaction strength is specified by the link weight $a_{ij}$. For instance, consider the two $3\times 3$ adjacency matrices 
\begin{align*}
A = \begin{pmatrix}
0 & 1 & 0 \\
2 & 0 & 0 \\
0 & 2 & 2
\end{pmatrix}, \quad \hat{A} = \begin{pmatrix}
0 & 2 & 0 \\
1 & 0 & 0 \\
0 & 2 & 3
\end{pmatrix}.
\end{align*} 
For all nodes $i,j$, it holds that $a_{ij}>0$ if and only if $\hat{a}_{ij}>0$. Hence, the two matrices $A$ and $\hat{A}$ have the same network topology. However, the interaction strength, e.g., from node 2 to node 1 is different, because $a_{12}=1$ but $\hat{a}_{12}=2$.

We denote the \textit{nodal state} of node~$i$ at time~$t$ by $x_i(t)$ and the nodal state vector by $x(t) = (x_1(t), ..., x_N(t))^T$. We consider a general class of dynamical models on networks \cite{barzel2013universality, timme2014revealing, laurence2019spectral} that describe the evolution of the nodal state~$x_i(t)$ of any node $i$ as
\begin{align} \label{eq:interaction_model}
\frac{d x_i(t)}{d t} = f_i\left( x_i(t) \right) + \sum^N_{j=1} a_{ij} g\left(x_i(t), x_j(t)\right).
\end{align}
The function~$f_i\left( x_i(t) \right)$ describes the \textit{self-dynamics} of node~$i$. The sum in (\ref{eq:interaction_model}) represents the interactions of node~$i$ with its neighbours. The interaction between two nodes $i$ and $j$ depends on the adjacency matrix~$A$ and the \textit{interaction function}~$g\left(x_i(t), x_j(t)\right)$. A broad spectrum of models follows from (\ref{eq:interaction_model}) by specifying the self-dynamics function~$f_i$ and the interaction function~$g$. We study six particular models of dynamics on networks, which are summarised by Table~\ref{table:dynamical_models}: 
\begin{center}
  \begin{table*} 
  \centering
 \setlength{\extrarowheight}{3pt}
  \begin{tabular}{ | l | c | c | }
    \hline   
Model & $f_i\left( x_i(t) \right)$ & $g\left(x_i(t), x_j(t)\right)$ \\ \hline\hline
Lotka-Volterra (LV) & $x_i(t)( \alpha_i - \theta_i x_i(t))$ & $- x_i(t) x_j(t)$ \\ \hline 
Mutualistic population (MP) & $x_i(t)( \alpha_i - \theta_i x_i(t))$ &$x_i(t) x^2_j(t) (1 + x^2_j(t))^{-1}$  \\ \hline 
Michaelis-Menten (MM) & $- x_i(t)$ & $ x^h_j(t)(1+x^h_j(t))^{-1}$ \\ \hline 
SIS epidemics (SIS) & $-\delta_i x_i(t)$ & $( 1- x_i(t)) x_j(t)$ \\ \hline 
Kuramoto (KUR) & $\omega_i$ & $\sin\left( x_i(t) - x_j(t) \right)$ \\ \hline  
Cowan-Wilson (CW) & $-x_i(t)$ & $\left( 1+\exp\left(-\tau ( x_j(t)- \mu )\right) \right)^{-1}$ \\ \hline  
 \end{tabular}
 \caption{Models of dynamics on networks.\label{table:dynamical_models}} 
  \end{table*}
\end{center}
\begin{description}
\item[Lotka-Volterra population dynamics (LV)] The Lotka-Volterra model \cite{macarthur1970species} describes the population dynamics of competing species. The nodal state $x_i(t)$ denotes the population size of species~$i$, the growth parameters of species~$i$ equal $\alpha_i>0$ and $\theta_i >0$, and the link weight~$a_{ij}$ quantifies the competition rate, or predation rate, of species~$j$ on species~$i$. 

 \item[Mutualistic population dynamics (MP)] We adopt the model of Harush and Barzel \cite{harush2017dynamic} to describe mutualistic population dynamics. The nodal state~$x_i(t)$ denotes the population size of species~$i$, the  growth parameters of species~$i$ are denoted by $\alpha_i>0$ and $\theta_i >0$, and the link weight~$a_{ij}>0$ quantifies the strength of mutualism between species~$i$ and species~$j$.

\item[Michaelis-Menten regulatory dynamics (MM)] The dynamics of gene regulatory networks can be described by the Michaelis-Menten equation \cite{alon2006introduction, gao2016universal, harush2017dynamic}. The nodal state~$x_i(t)$ is the expression level of gene~$i$, the Hill coefficient is denoted by $h$, and the link weights~$a_{ij}>0$ are the reaction rate constants. 

\item[Susceptible-Infected-Susceptible epidemics (SIS)] Spreading phenomena, such as the epidemic of an infectious disease, can be described by the susceptible-infected-susceptible model \cite{bailey1975mathematical, lajmanovich1976deterministic, van2009virus, pastor2015epidemic}. The nodal state~$x_i(t)$ equals the infection probability of node~$i$. The parameter~$\delta_i >0$ denotes the curing rate, and the link weight~$a_{ij}$ is the infection rate from node~$j$ to node~$i$.

\item[Kuramoto oscillators (KUR)] The Kuramoto model \cite{kuramoto2003chemical} has been applied to various synchronisation phenomena of phase oscillators, such as fMRI activity of brain regions \cite{cabral2014exploring}. Here, the nodal state~$x_i(t)$ corresponds to the phase of oscillator~$i$, the parameter $\omega_i$ denotes the natural frequency of node~$i$, and the coupling strength from node~$j$ to node~$i$ is given by the link weight~$a_{ij}$.

\item[Cowan-Wilson neural firing (CW)] The firing-rates of neurons can be described by the Cowan-Wilson model \cite{wilson1972excitatory, laurence2019spectral}. Here, the nodal state~$x_i(t)$ is the activity of neuron~$i$, and the parameters $\tau$ and $\mu$ are the slope and the threshold of the neural activation function. The link weight~$a_{ij}$ specifies the number and strength of synapses from neuron $j$ to neuron $i$.
\end{description}

As stated in \cite{strogatz2001exploring}, there are three possibilities for the qualitative long-term behaviour of the dynamical system (\ref{eq:interaction_model}). First, the nodal state $x(t)$ might approach a steady state $x_\infty = \underset{t\rightarrow \infty}{\operatorname{lim}} ~x(t)$. At the steady state $x_\infty$, the nodal state does not change any longer, and it holds that $dx(t)/dt=0$. Second, the nodal state $x(t)$ might converge to a \textit{limit cycle}, a curve on which the nodal state $x(t)$ circulates forever. Third, the nodal state $x(t)$ might never come to rest nor enter a repeating cycle. Then, the state $x(t)$ perpetually continues to move in an irregular pattern.

\section{Prediction Algorithm for Dynamics on Networks} \label{sec:prediction_algorithm}
The true adjacency matrix $A$ is unknown. To predict the nodal state $x(t)$, we obtain an estimate $\hat{A}$ of the matrix $A$ from past observations of the nodal state $x(t)$. With the estimated matrix $\hat{A}$, we can approximate the governing equations (\ref{eq:interaction_model}) for the nodal state $x(t)$. Figure~\ref{fig:framework} illustrates the framework for predicting network dynamics.

\begin{figure}[!ht]
\centering
  \begin{subfigure}[c]{0.3\textwidth}
         \centering
         \caption{Observation}
         \tikz[remember picture]{        
        \node(1){
         \includegraphics[width=\textwidth, frame]{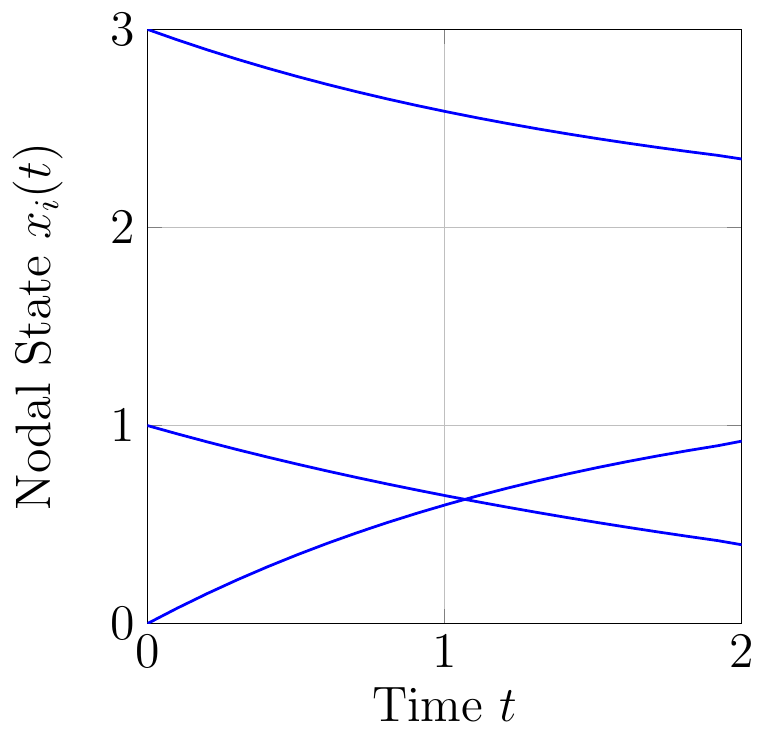}
         };  }     
     \end{subfigure}
     \hfill
     \begin{subfigure}[c]{0.3\textwidth} 
     \centering
     \caption{Reconstructed network}
     \tikz[remember picture]{      
     \node(2){  
     \fbox{
        $\hat{A} = \begin{pmatrix}
        \hat{a}_{11} & \hat{a}_{12} & \hat{a}_{13} \\
        \hat{a}_{21} & \hat{a}_{22} & \hat{a}_{23} \\
        \hat{a}_{31} & \hat{a}_{3 2} & \hat{a}_{33} 
              \end{pmatrix}$ 
              }
     };  } 
    \end{subfigure}
     \hfill
     \begin{subfigure}[c]{0.3\textwidth}
         \centering
         \caption{Prediction}\tikz[remember picture]{        
        \node(3){
         \includegraphics[width=\textwidth, frame]{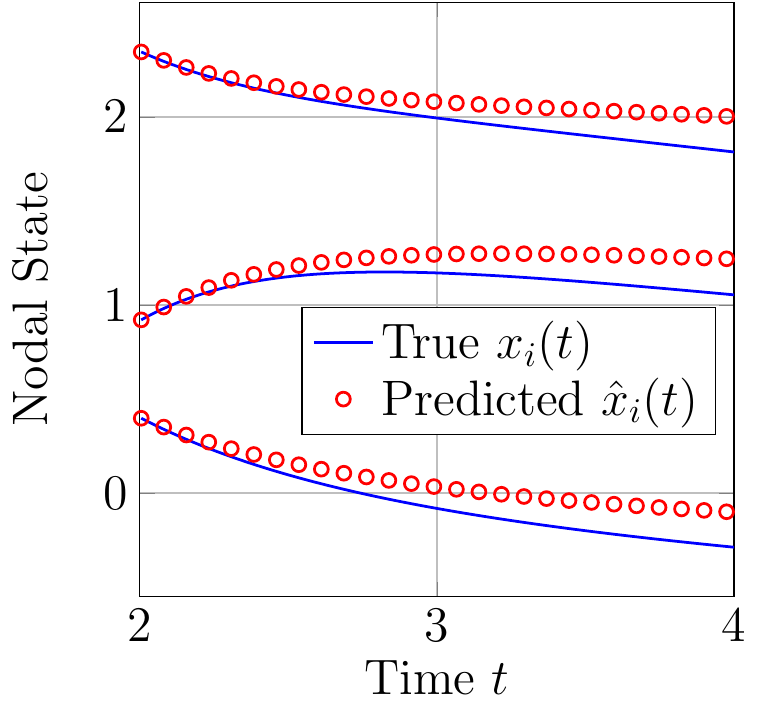}
          };  }     
     \end{subfigure}
\tikz[overlay,remember picture]{
\draw[line width=2pt, ->] ([xshift=0cm]1.east) -- ([xshift=0.75cm]1.east)node[midway,left,text width=0.35\textwidth]{};
\draw[line width=2pt, <-] ([xshift=-0.25cm]3.west) -- ([xshift=-1cm]3.west)node[midway,left,text width=0.35\textwidth]{};
} 
        \caption{\textbf{Framework for Predicting Dynamics on Networks.} This example shows a small network of $N=3$ nodes. \mbox{\textbf{(a)} The} nodal state $x_i(t)$ is observed for all nodes $i$ until the observation time $t_\textrm{obs}=2$. The evolution of the state $x_i(t)$ follows from the system~(\ref{eq:interaction_model}) with the known functions $f_i$, $g$ and the unknown adjacency matrix $A$. \mbox{\textbf{(b)} From} the nodal state observations, we infer an estimate~$\hat{A}$ for the true adjacency matrix $A$ by the LASSO (\ref{eq:lasso}). \mbox{\textbf{(c)} For} any time $t\ge t_\textrm{obs}$, the predicted nodal state $\hat{x}_i(t)$ follows from the system~(\ref{eq:interaction_model}) by replacing the true adjacency matrix $A$ with the estimate $\hat{A}$. The predicted nodal state is initialised as $\hat{x}_i(t_\textrm{obs}) = x_i(t_\textrm{obs})$ for all nodes $i$. }
        \label{fig:framework}
\end{figure}

 We consider $n+1$ nodal state observations $x(0), x(\Delta t), ..., x(n \Delta t)$ from the initial time $t=0$ until the observation time $t = t_\textrm{obs}$. Here, $\Delta t>0$ denotes the sampling time with $n \Delta t = t_\textrm{obs}$. For a sufficiently small sampling time $\Delta t$, the solution of the model (\ref{eq:interaction_model}) obeys
\begin{equation}\label{eq:derivative_apx}
x_i\left( (k+1) \Delta t\right) \approx x_i\left( k \Delta t\right) + \Delta t \left.\frac{dx_i(t)}{dt}\right|_{t = k\Delta t} 
\end{equation}
at every time $k=0, ..., n-1$. A crucial observation is that the derivative $dx_i(t)/dt$ in (\ref{eq:interaction_model}) is linear with respect to the entries $a_{ij}$ of the adjacency matrix $A$. Thus, we obtain from (\ref{eq:interaction_model}) and the discrete-time approximation (\ref{eq:derivative_apx}) an approximate linear system as
\begin{equation} \label{eq:linear_system}
V_i \approx F_i  \begin{pmatrix}
a_{i1}\\
\vdots\\
a_{iN}
\end{pmatrix},
\end{equation}
where the $n\times 1$ vector $V_i$ equals
\begin{align}\label{eq:V_i_def}
V_i = \begin{pmatrix}
\dfrac{ x_i\left( \Delta t\right) - x_i\left( 0 \right) }{\Delta t} - f_i( 0 )\\
\vdots \\
\dfrac{ x_i\left( n \Delta t\right) - x_i\left( (n-1) \Delta t\right) }{\Delta t} - f_i( (n-1)\Delta t )
\end{pmatrix},
\end{align}
and the $n \times N$ matrix $F_i$ equals
\begin{align}\label{eq:F_i_def}
F_i = \begin{pmatrix}
g( x_i( 0 ), x_1( 0 ) )& ... & g( x_i( 0 ), x_N( 0 ) ) \\
\vdots & \ddots & \vdots \\
g( x_i( (n-1) \Delta t ), x_1( (n-1) \Delta t ) )& ... & g( x_i( (n-1) \Delta t ), x_N( (n-1) \Delta t ) )
\end{pmatrix}.
\end{align} 
From (\ref{eq:linear_system}), we obtain an estimate $\hat{A}$ of the adjacency matrix $A$ by solving 
\begin{align}\label{eq:lasso}
\begin{aligned} & \underset{a_{i1}, ..., a_{iN}}{\text{min}}  & & \left\lVert V_i - F_i  \begin{pmatrix}
a_{i1}\\
\vdots\\
a_{iN}
\end{pmatrix} \right\rVert^2_2 + \rho_i \sum^N_{j=1} a_{ij} & \\
 &\text{s.t.} & & a_{ij} \ge 0 \quad j=1, ..., N & 
\end{aligned} 
 \end{align}
 for every node $i$. The optimisation problem (\ref{eq:lasso}) is known as the \textit{least absolute shrinkage and selection operator} (LASSO) \cite{hastie2015statistical}. The application of LASSO, and variations thereof, to network reconstruction is an established approach \cite{shen2014reconstructing, timme2014revealing, wang2016data, prasse2019gemf}. The first addend in (\ref{eq:lasso}) measures the consistency of the link weights $a_{i1}$, ..., $a_{iN}$ with the observations $x(0), ..., x(n \Delta t)$, given the dynamical model (\ref{eq:interaction_model}). The second addend favours a sparse solution. The greater the \textit{regularisation parameter} $\rho_i>0$, the sparser the reconstructed adjacency matrix $\hat{A}$. We set the value of the parameter $\rho_i$ by hold-out cross-validation \cite{bergmeir2012use}. The LASSO (\ref{eq:lasso}) can be interpreted as Bayesian estimation problem, provided an exponential prior degree distribution of the adjacency matrix $A$. For more details on the reconstruction algorithm and the Bayesian interpretation, we refer the reader to Appendix~\ref{appendix:lasso}.

\section{The Prediction Accuracy versus the Reconstruction Accuracy}
To evaluate the prediction algorithm outlined in Section~\ref{sec:prediction_algorithm}, we consider the dynamics in Table~\ref{table:dynamical_models} on the respective real-world networks: \mbox{\textbf{(LV)} Food} web of Little Rock Lake \cite{martinez1991artifacts}, \mbox{\textbf{(MP)} Mutualistic} insect interactions \cite{kato1990insect,rezende2007non}, \mbox{\textbf{(MM)} Gene} regulatory network of the yeast S. Cerevisiae \cite{milo2002network}, \mbox{\textbf{(SIS)} Face}-to-face contacts between visitors of the \quotes{Infectious: stay away exhibition} \cite{isella2011s}, \mbox{\textbf{(KUR)} Structural} connectivity between brain regions \cite{van2013wu, tewarie2019spatially}, \mbox{\textbf{(CW)} C.} elegans neuronal connectivity \cite{white1986structure,chen2006wiring}. Appendix~\ref{appendix:empirical_networks_model_parameters} specifies the real-world networks and model parameters in detail.
  
\begin{figure}[!ht]
     \centering
     \begin{subfigure}[t]{0.3\textwidth}
         \centering
         \caption{LV ($N=183$)}
         \includegraphics[width=\textwidth]{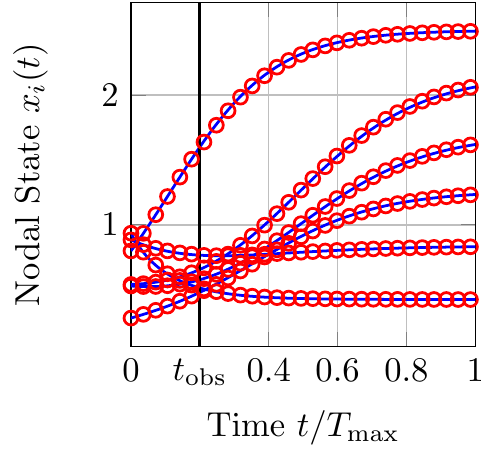}
         \label{fig:prediction_LV}
     \end{subfigure}
     \quad
     \begin{subfigure}[t]{0.3\textwidth}
         \centering
         \caption{MP ($N=679$)}
         \includegraphics[width=\textwidth]{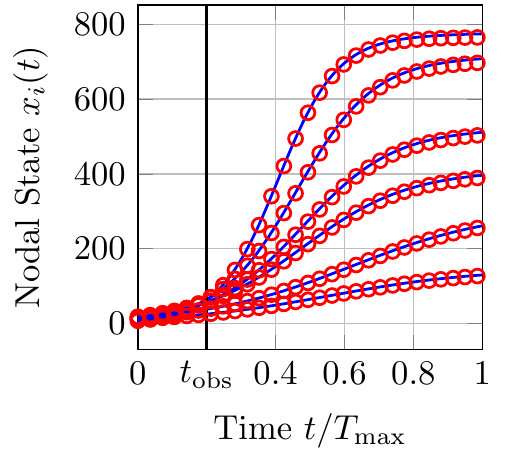}
         \label{fig:prediction_MP}
     \end{subfigure}
     \quad
     \begin{subfigure}[t]{0.3\textwidth}
         \centering
         \caption{MM ($N=620$)} 
         \includegraphics[width=\textwidth]{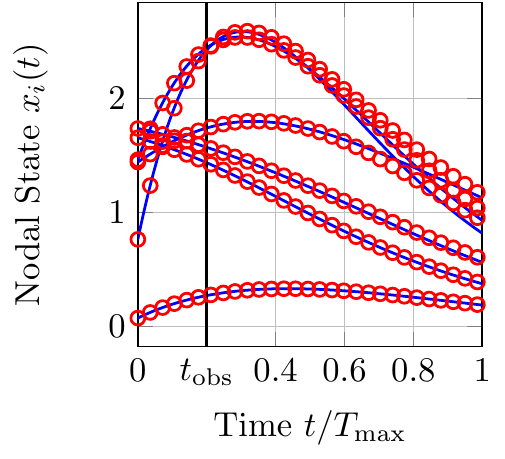}
         \label{fig:prediction_MM}
     \end{subfigure}
     \\ 
     \begin{subfigure}[t]{0.3\textwidth}
         \centering
         \caption{SIS ($N=410$)}
         \includegraphics[width=\textwidth]{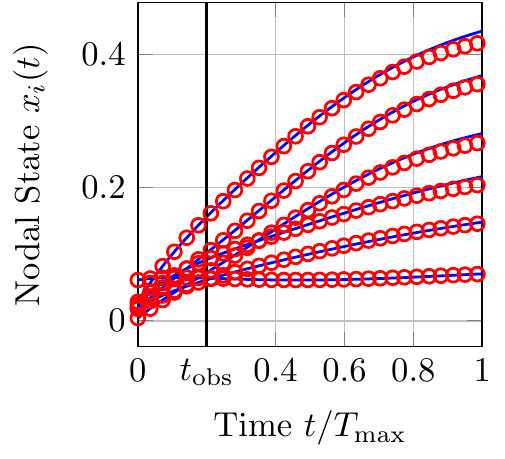}
         \label{fig:prediction_SIS}
     \end{subfigure}
     \quad
     \begin{subfigure}[t]{0.3\textwidth}
         \centering
         \caption{KUR ($N=78$)}
         \includegraphics[width=\textwidth]{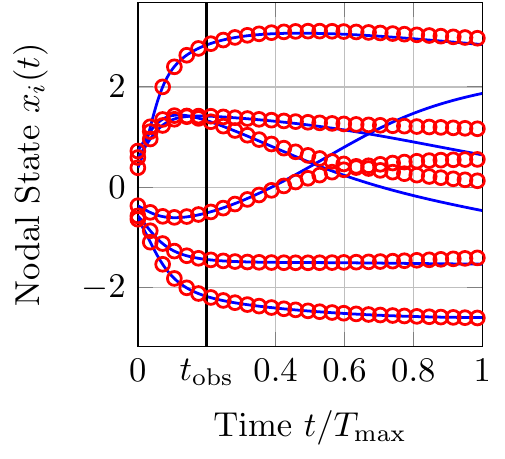}
         \label{fig:prediction_kuramoto}
     \end{subfigure}
     \quad
     \begin{subfigure}[t]{0.3\textwidth}
         \centering
         \caption{CW ($N=282$)}
         \includegraphics[width=\textwidth]{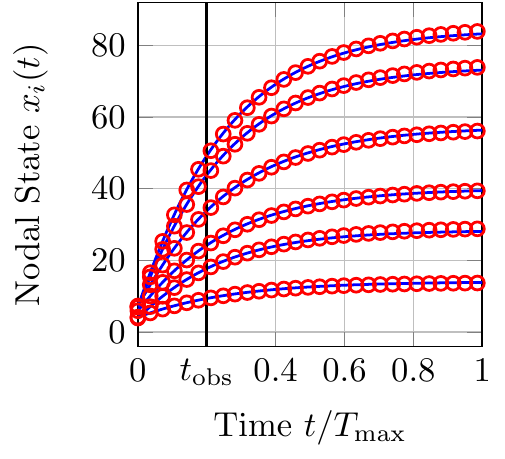}
         \label{fig:prediction_cw}
     \end{subfigure}     
        \caption{\textbf{Prediction Accuracy of Network Dynamics.} The blue curves are the true nodal states $x_i(t)$. The red marks are the nodal states $\hat{x}_i(t)$ on the reconstructed network $\hat{A}$, initialised at $\hat{x}(0)=x(0)$ and $\hat{x}(t_\textrm{obs})=x(t_\textrm{obs})$ for the time intervals $t< t_\textrm{obs}$ and $t \ge t_\textrm{obs}$, respectively. For readability, only six nodal states $x_i(t)$ are depicted for each network. The maximum prediction time~$T_\textrm{max}$ is different for each dynamic model, and the observation time equals $t_\textrm{obs} = T_\textrm{max}/5$. The number of observations is $n=100$.  \label{fig:dynamics_prediction}}        
\end{figure}  

 Figure~\ref{fig:dynamics_prediction} shows that the nodal state prediction $\hat{x}(t)$ is accurate at all times $t\ge t_\textrm{obs}$, except for the Kuramoto model. The Kuramoto nodal state prediction $\hat{x}(t)$ is accurate from time $t = t_\textrm{obs}$ until $t \approx 2 t_\textrm{obs}$, but then diverges from the true nodal state $x(t)$. In Section~\ref{sec:ill_conditioning}, we explain what makes the Kuramoto model different.

In view of the high prediction accuracy, we face the fundamental question: \textit{How similar are the topologies of the estimated network $\hat{A}$ and the true network $A$?} We quantify the similarity of the networks $A$ and $\hat{A}$ by two topological metrics. First, we consider the area under the receiver-operating-characteristic curve (AUC) \cite{fawcett2006introduction}. An AUC of 0.5 corresponds to reconstructing the network by tossing a coin for every possible link. The closer the AUC is to 1, the greater the similarity of the reconstructed topology to the true topology. Second, we consider the in-degree distribution of the matrices $A$ and $\hat{A}$. The in-degree $d_i$ of node $i$ equals the number of links that end at node $i$. The in-degree distribution is given by $\operatorname{Pr}\left[D \ge d\right]$, where $D$ is the degree of a randomly chosen node in the network.

\begin{figure}[!ht]
     \centering
     \begin{subfigure}[t]{0.3\textwidth}
         \centering
         \caption{LV: AUC=0.52}
         \includegraphics[width=\textwidth]{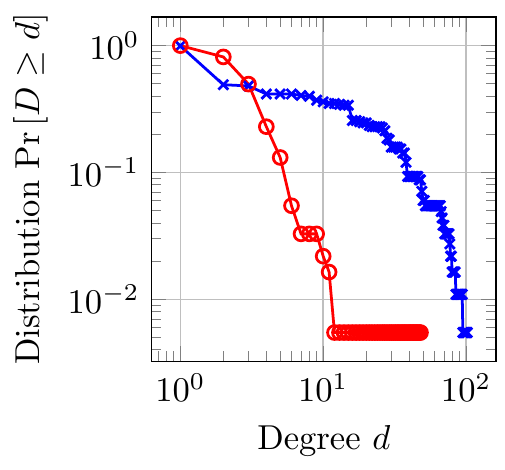}
         \label{fig:in_degrees_LV}
     \end{subfigure}
     \quad
     \begin{subfigure}[t]{0.3\textwidth}
         \centering
         \caption{MP: AUC=0.51}
         \includegraphics[width=\textwidth]{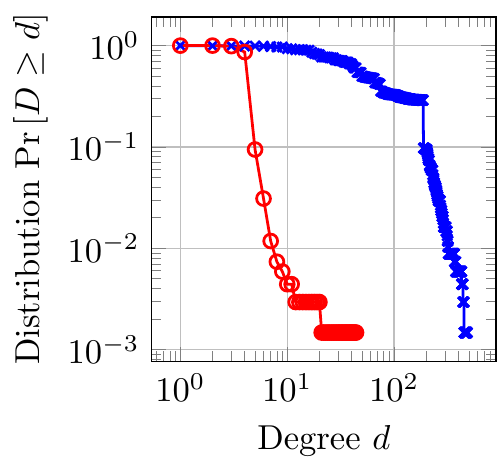}
         \label{fig:in_degrees_MP}
     \end{subfigure}
     \quad
     \begin{subfigure}[t]{0.3\textwidth}
         \centering
         \caption{MM: AUC=0.57}
         \includegraphics[width=\textwidth]{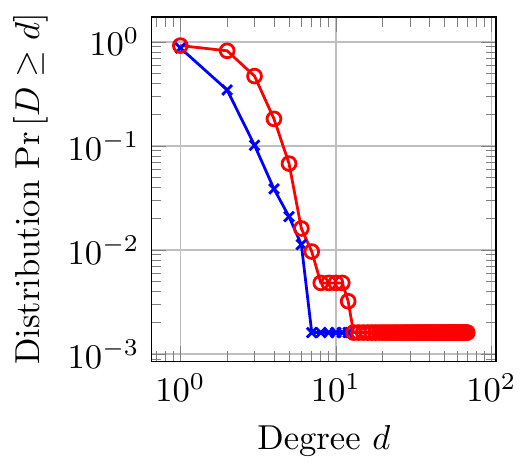}
         \label{fig:in_degrees_MM}
     \end{subfigure}
     \\
     \begin{subfigure}[t]{0.3\textwidth}
         \centering
         \caption{SIS: AUC=0.52}
         \includegraphics[width=\textwidth]{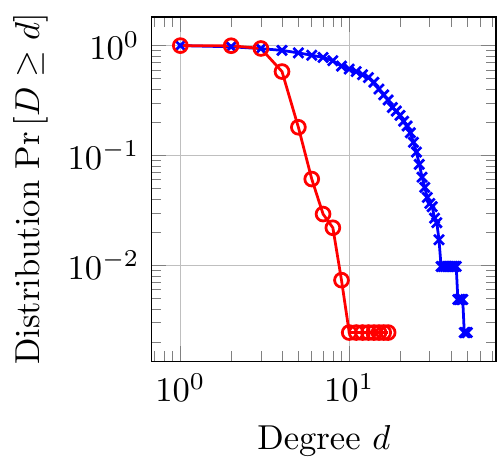}
         \label{fig:in_degrees_SIS}
     \end{subfigure}
     \quad
     \begin{subfigure}[t]{0.3\textwidth}
         \centering
         \caption{KUR: AUC=0.54}
         \includegraphics[width=\textwidth]{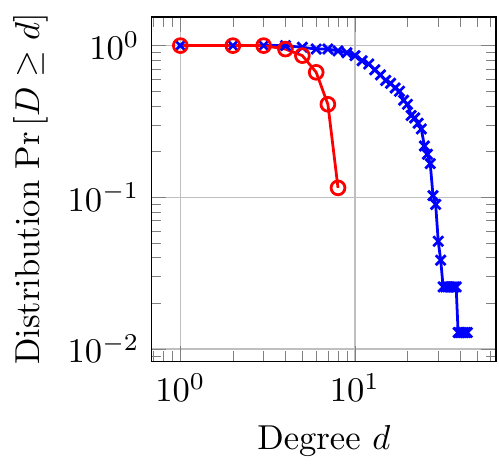}
         \label{fig:in_degrees_kuramoto}
     \end{subfigure}
     \quad
     \begin{subfigure}[t]{0.3\textwidth}
         \centering
         \caption{CW: AUC=0.53}
         \includegraphics[width=\textwidth]{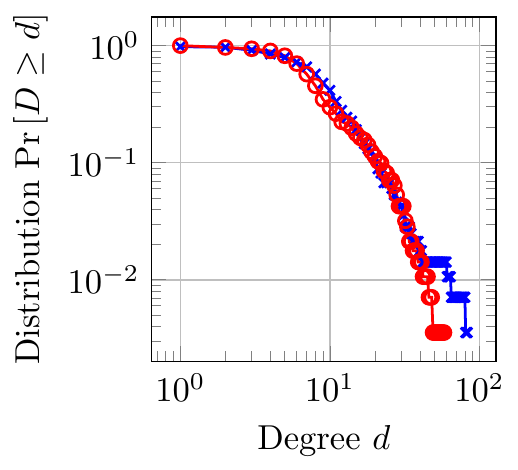}
         \label{fig:in_degrees_cw}
     \end{subfigure}     
        \caption{\textbf{Network Reconstruction Accuracy.} The reconstruction accuracy for the networks in Figure~\ref{fig:dynamics_prediction} with respect to two topological metrics. First, the AUC value of the reconstructed network $\hat{A}$. Second, the in-degree distributions $\operatorname{Pr}\left[D \ge d\right]$ for the estimated matrix $\hat{A}$ in red and the true matrix $A$ in blue.}
        \label{fig:in_degree_distribution}
\end{figure}     
     
Figure~\ref{fig:in_degree_distribution} compares the reconstructed network $\hat{A}$ to the true network $A$. The AUC value is close to 0.5 for all models. \textit{Hence, the topology of the reconstructed network bears practically no resemblance to the true network topology.} Moreover, the degree distribution $\operatorname{Pr}\left[D \ge d\right]$ of the reconstructed network differs strongly to the degree distribution of the true network, except for Figure~\ref{fig:in_degrees_MM} and Figure~\ref{fig:in_degrees_cw}. We emphasise that, even if two networks have the same degree distribution $\operatorname{Pr}\left[D \ge d\right]$, the network topologies can be entirely different. For instance, the AUC value equals only 0.53 in Figure~\ref{fig:in_degrees_cw}. 

\section{Proper Orthogonal Decomposition of the Nodal State Dynamics}\label{sec:ill_conditioning}

The dramatic contrast of accurate prediction of dynamics but inaccurate network reconstruction demands an explanation. The sole input to the prediction algorithm are the observations of the nodal state $x(t)$, which has two implications. First, the nodal state sequence $x(0), ..., x(n\Delta t)$ does not contain sufficient information to infer the network topology. Second, we do not need the topology to predict the nodal state $x(t)$. But, if not the topology, what else is required to accurately predict dynamics on networks? 

\begin{example}\label{example}
Consider a small network of $N=3$ nodes with the weighted adjacency matrix 
\begin{align*}
A = \begin{pmatrix}
0 & 0 & 2 \\
1 & 0 & 3 \\
0 & 1 & 1
\end{pmatrix}.
\end{align*}
Suppose that the nodal state vector equals $x(t)=( c_1(t), c_2(t), c_2(t))^T$ at every time $t$, where $c_1(t)$ and $c_2(t)$ denote some scalar functions. In other words, node 2 and node 3 have the same state at every time $t$. As vector equation, the nodal state $x(t)$ satisfies
\begin{align} \label{x_t_example}
x(t) = c_1(t) \begin{pmatrix}
1 \\
0 \\
0
\end{pmatrix}
+ c_2(t) \begin{pmatrix}
0 \\
1 \\
1
\end{pmatrix}.
\end{align}
For simplicity, we only consider the estimation of the links to node 1, i.e., $a_{11}$, $a_{12}$ and $a_{13}$. The evolution of the nodal state $x_1(t)$ follows from the dynamical model~(\ref{eq:interaction_model}) as
\begin{align*}
\frac{d x_1(t)}{dt} &= f_1(x_1(t)) + 2 g(x_1(t), x_3(t)).
\end{align*}
However, since $x_2(t)=x_3(t)=c_2(t)$ at every time $t$, it also holds that
\begin{align*} 
\frac{d x_1(t)}{dt} &= f_1(x_1(t)) + 2 g(x_1(t), x_2(t)).
\end{align*}
Thus, if we estimated the adjacency matrix $\hat{A}$ with $\hat{a}_{11}=0$, $\hat{a}_{12}=2$ and $\hat{a}_{13}=0$, then we could perfectly predict the nodal state $x_1(t)$. But neither estimate $\hat{a}_{12}$ nor $\hat{a}_{13}$ is equal to the true link weights $a_{12}$ and $a_{13}$, respectively. 
\end{example}
For Example \ref{example}, the estimate $\hat{A}$ yields a perfect prediction of the dynamics, because $2 g(x_1(t), x_3(t)) = 2 g(x_1(t), x_2(t))$. More generally, the estimated network $\hat{A}$ predicts the dynamics (\ref{eq:interaction_model}) exactly if and only if, at every future time $t\ge t_\textrm{obs}$,
\begin{align} \label{eq:A_Ahat_prediction_iff}
\sum^N_{j=1} \hat{a}_{ij} g\left(x_i(t), x_j(t)\right) = \sum^N_{j=1} a_{ij} g\left(x_i(t), x_j(t)\right).
\end{align} 
\textit{The network topology of the estimate $\hat{A}$ is relevant for predicting the dynamics only if the topology relates to (\ref{eq:A_Ahat_prediction_iff}).} We emphasise that (\ref{eq:A_Ahat_prediction_iff}) is linear with respect to the matrix $\hat{A}$ but not linear with respect to the nodal state~$x(t)$, unless the interaction function $g$ is linear.

In Example~\ref{example}, there exists a matrix $\hat{A} \neq A$ that satisfies (\ref{eq:A_Ahat_prediction_iff}), because the $3 \times 1$ nodal state vector $x(t)$ is equal to the linear combination (\ref{x_t_example}) of only 2 orthogonal vectors $y_1 = (1, 0, 0)^T$ and $y_2 = (0, 1, 1)^T$. In general, it is possible to \textit{approximate} any $N\times 1$ nodal state vector $x(t)$ by 
\begin{align}\label{x_t_low_rank}
x(t) \approx \sum^m_{p=1} c_p(t) y_{p}
\end{align}
at every time $t \in [0,T_\textrm{max}]$. Here, the \textit{agitation modes} $y_1, ..., y_m$ are some orthogonal vectors. The approximation (\ref{x_t_low_rank}) is known as \textit{proper orthogonal decomposition} \cite{antoulas2005approximation, kerschen2005method}. The more agitation modes~$m$, the more accurate the approximation (\ref{x_t_low_rank}). If $m=N$, then the approximation (\ref{x_t_low_rank}) is exact, because any $N\times 1$ vector $x(t)$ can be written as the linear combination of $N$ orthogonal vectors. Intuitively speaking, if the proper orthogonal decomposition (\ref{x_t_low_rank}) is accurate for $m<<N$ modes, then the nodal state vector $x(t)$ is barely agitated. 

In contrast to the zero-one vectors in Example \ref{example}, the agitation modes~$y_p$ are usually more complicated. We obtain the agitation modes~$y_p$ from the observations of the nodal state dynamics in two steps. First, we define the $N\times( n + 1)$ nodal state matrix as
\begin{align*}
X = \begin{pmatrix}
x(0) & x(\Delta t)& ... & x(n\Delta t)
\end{pmatrix}.
\end{align*}
Second, we obtain the agitation modes $y_1, ..., y_m$ as the first $m$ left-singular vectors of the nodal state matrix $X$. At any time $t\ge 0$, the scalar functions $c_p(t)$ follow as the inner product
\begin{align*}
c_p(t) = y^T_p x(t).
\end{align*}

Figure~\ref{fig:pod_from_lti} shows that the proper orthogonal decomposition (\ref{x_t_low_rank}) is accurate at all times $t\in [0, T_\textrm{max}]$. The number of agitation modes $y_p$ equals $m=15$, which is considerably lower than the network size~$N$. We emphasise that the agitation modes $y_p$ are computed from the nodal state $x(t)$ only until the observation time $t_\textrm{obs}$. Nevertheless, the proper orthogonal decomposition is accurate also at times $t\ge t_\textrm{obs}$. Hence, during the observation interval $[0, t_\textrm{obs}]$, the nodal state $x(t)$ quickly locks into only few agitation modes $y_p$, which govern the dynamics also for future time $t\ge t_\textrm{obs}$. For clarity, we stress that the proper orthogonal decomposition (\ref{x_t_low_rank}) cannot be used (directly) to predict the nodal state~$x(t)$: Additionally to the agitation modes $y_p$, the coefficients $c_p(t)$ at times $t\ge t_\textrm{obs}$ require the future nodal state $x(t)$.

\begin{figure}[!ht]
     \centering
     \begin{subfigure}[t]{0.3\textwidth}
         \centering
         \caption{LV ($N=183$)}
         \includegraphics[width=\textwidth]{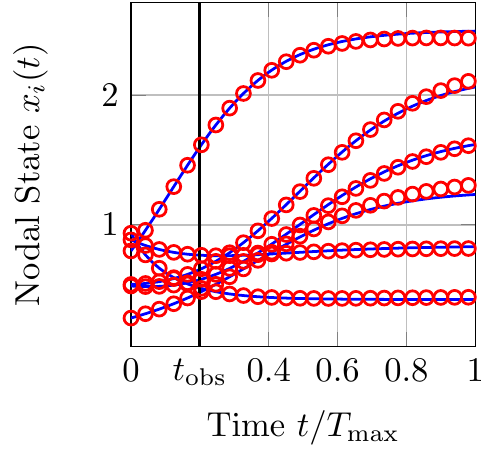}
         \label{fig:pod_LTI_LV}
     \end{subfigure}
     \quad
     \begin{subfigure}[t]{0.3\textwidth}
         \centering
         \caption{MP ($N=679$)}
         \includegraphics[width=\textwidth]{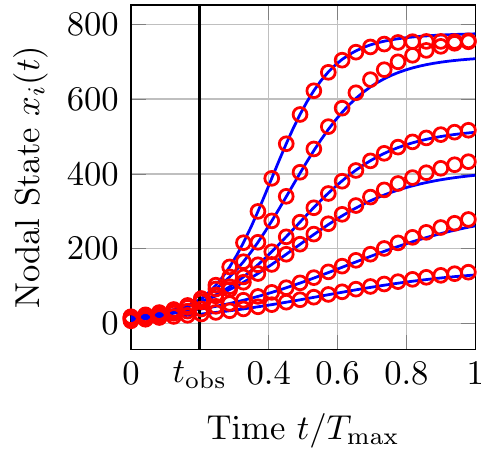}
         \label{fig:pod_LTI_MP}
     \end{subfigure}
     \quad
     \begin{subfigure}[t]{0.3\textwidth}
         \centering
         \caption{MM ($N=620$)}
         \includegraphics[width=\textwidth]{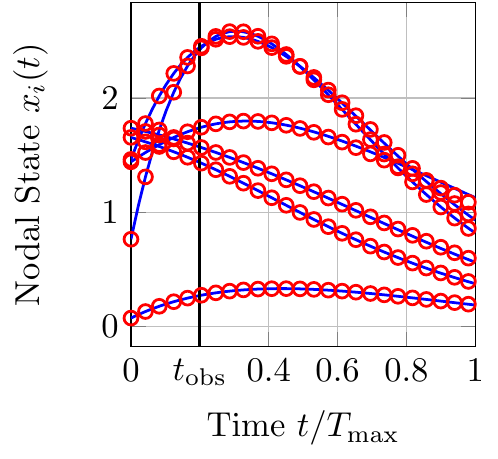}
         \label{fig:pod_LTI_MM}
     \end{subfigure}
     \\
     \begin{subfigure}[t]{0.3\textwidth}
         \centering
         \caption{SIS ($N=410$)}
         \includegraphics[width=\textwidth]{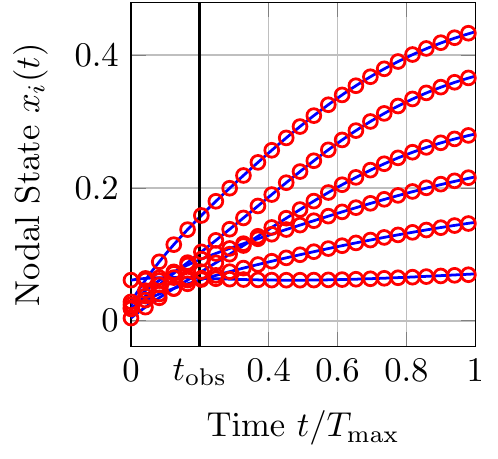}
         \label{fig:pod_LTI_SIS}
     \end{subfigure}
     \quad
     \begin{subfigure}[t]{0.3\textwidth}
         \centering
         \caption{KUR ($N=78$)}
         \includegraphics[width=\textwidth]{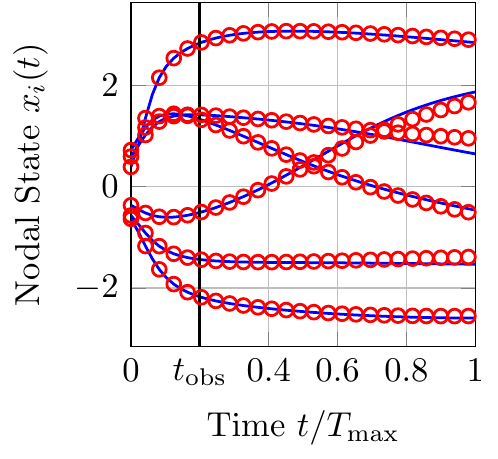}
         \label{fig:pod_LTI_kuramoto}
     \end{subfigure}
     \quad
     \begin{subfigure}[t]{0.3\textwidth}
         \centering
         \caption{CW ($N=282$)}
         \includegraphics[width=\textwidth]{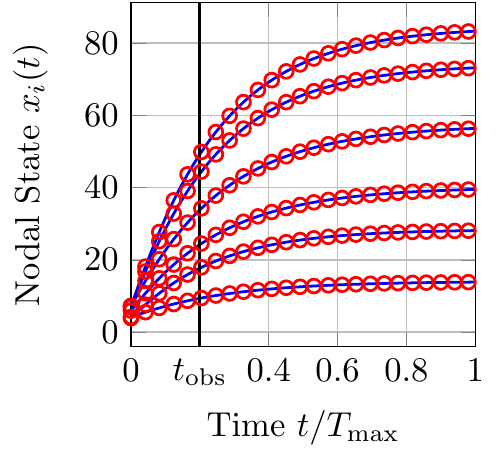}
         \label{fig:pod_LTI_cw}
     \end{subfigure}
    \caption{\textbf{Proper Orthogonal Decomposition of the Nodal State Dynamics.} The exact nodal state $x(t)$ in blue and the approximation (\ref{x_t_low_rank}) in red. For readability, only six nodal states $x_i(t)$ are depicted for each network. The approximation equals the linear combination of $m=15$ agitation modes $y_{1}$, ..., $y_{m}$, which are computed by observing the nodal state $x(t)$ from time $t=0$ to $t=t_\textrm{obs}$.} \label{fig:pod_from_lti}
\end{figure}

The Kuramoto oscillators are the only dynamics in Figure~\ref{fig:dynamics_prediction} that do not converge to a steady state~$x_\infty$. Hence, the proper orthogonal decomposition (\ref{x_t_low_rank}) is \textit{not} accurate when $t>>t_\textrm{obs}$, which explains that the prediction is least accurate for the Kuramoto model. 

Why, precisely, is it not possible to reconstruct the network $A$? The linear system (\ref{eq:linear_system}) forms the basis for the network reconstruction. The rank of the matrix $F_i$ is essential: If the matrix $F_i$ is of full rank, i.e., $\operatorname{rank}(F_i) = N$, then there is exactly one solution to (\ref{eq:linear_system}), namely the entries $a_{i1}, ..., a_{1N}$ of the true adjacency matrix~$A$. Otherwise, if $\operatorname{rank}(F_i) < N$, then there are infinitely many solutions to~(\ref{eq:linear_system}). If there is more than one solution to~(\ref{eq:linear_system}), then the LASSO estimation (\ref{eq:lasso}) results in the sparsest solution~$\hat{A}$ (with respect to the $\ell_1$-norm). 

We compute the numerical\footnote{Every computer works with finite precision arithmetic. Thus, not the exact rank but the \textit{numerical} rank of the matrix $F_i$ is decisive to solve the system (\ref{eq:linear_system}) in practice. The numerical rank equals the number of singular values of the matrix $F_i$ that are greater than a small threshold, which is set in accordance to the machine precision.} rank for Barab\'asi-Albert random graphs versus the network size $N$. Here, we consider the \textit{best case} for the network reconstruction: The derivative $dx_i(t)/dt$ is observed exactly, without any approximation error as in (\ref{eq:derivative_apx}). Hence, the system (\ref{eq:linear_system}) is satisfied with equality. Figure~\ref{fig:rank_vs_N} shows that the numerical rank of the matrix $F_i$ stagnates as the network size $N$ grows. Hence, the linear system (\ref{eq:linear_system}) is \textit{severely ill-conditioned} for large networks. For example, for the SIS process on a network with $N=1000$ nodes, we observe a $1000\times 1001$ nodal state sequence $x(0)$, ..., $x(1000\Delta t)$, but the numerical rank does not exceed~$32$.

The matrix $F_i$, defined by (\ref{eq:F_i_def}), follows from applying the nonlinear function $g$ to the nodal state~$x(t)$. The rank of the matrix $F_i$ is low, because the nodal state~$x(t)$ is barely agitated, see Appendix~\ref{appendix:ill_condition_matrix_F}.
               
 \begin{figure}[!ht]
         \centering
         \includegraphics[width=\textwidth]{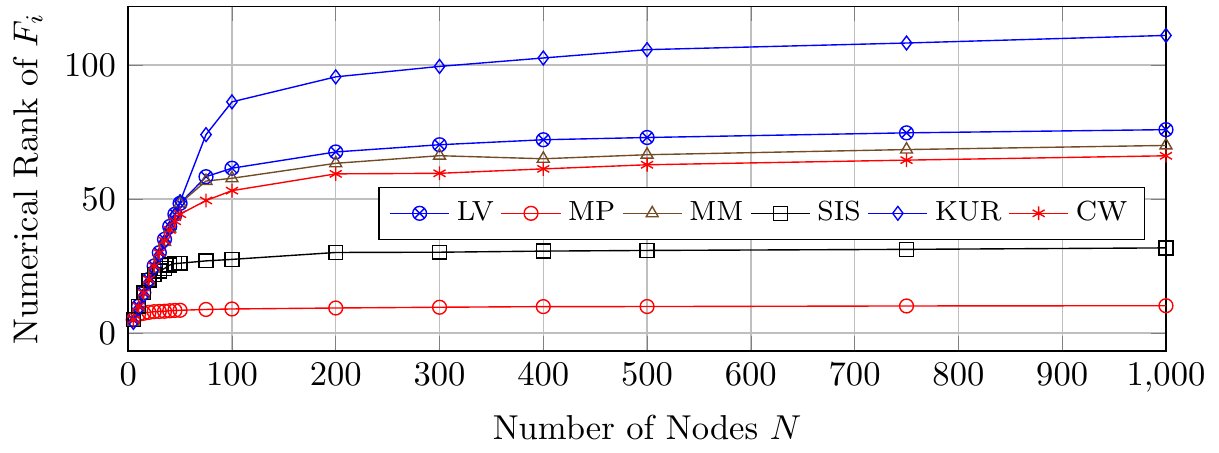}
         \caption{\textbf{Ill-Condition of the Network Reconstruction.} The numerical rank of the matrix~$F_i$ of the linear system (\ref{eq:linear_system}) versus the number of nodes $N$ for Barab\'asi-Albert random graphs. The observation time is set to $t_\textrm{obs}=T_\textrm{max}$, and the number of observations equals $n=1000$.}
         \label{fig:rank_vs_N}
     \end{figure}
                     
\section{Conclusions and Outlook}  

This works considers the prediction of general dynamics on networks, based on past observations of the dynamics. We proposed a prediction framework which consists of two steps. First, the network is estimated from the nodal state observations by the LASSO. The first step seemingly fails, since the estimated network bears no \textit{topological} similarity with the true network. Second, the nodal state is predicted by iterating the dynamical model on the inaccurately estimated network. Counterintuitively, the prediction is accurate!

The network reconstruction and prediction accuracy do not match, because the nodal state is barely agitated. Furthermore, the modes of agitation are hardly related to the network topology. Instead of the true topology, the estimated network does capture the interplay with the agitation modes.

We conclude with five points. First, the agitation modes depend on the initial nodal dynamics and, particularly, on the initial nodal state $x(0)$. As a result, the estimated network $\hat{A}$ depends on the initial state $x(0)$. Thus, as confirmed by numerical simulations, the adjacency matrix $\hat{A}$ may be useless for the prediction of dynamics with a different initial state $\tilde{x}(0)\neq x(0)$.
    
Second, the dynamics (\ref{eq:interaction_model}) are autonomous, since there is no control input. In some applications \cite{timme2007revealing, prasse2019gemf}, it may be possible to control the nodal state $x(t)$. Controlling the dynamics would result in more agitation modes of the nodal state $x(t)$. However, physical control constraints, such as power capacities, might limit the number of additional agitation modes.

Third, we could observe \textit{multiple} nodal state sequences $x(0), ..., x(n\Delta)$ with different initial states ~$x(0)$ on the same network. For sufficiently many sequences, we would observe enough agitation modes to reconstruct the network exactly, by stacking the respective linear systems (\ref{eq:linear_system}). However, Figure~\ref{fig:rank_vs_N} shows that the numerical rank of the matrix $F_i$ stagnates for large networks. Thus, the greater the network, the more time series must be observed to reconstruct the adjacency matrix~$A$. Observing a sufficiently great number of time series might not be viable, e.g., for the epidemic outbreak of a novel virus like SARS-CoV-2. 

Fourth, the proper orthogonal decomposition (\ref{x_t_example}) can be exact. If the network has equitable partitions, then the number of agitation modes equals the number of cells for some dynamical models \cite{van2010graph, o2013observability, bonaccorsi2015epidemic, schaub2016graph, devriendt2020bifurcation}. Furthermore, the SIS contagion dynamics reduce to only $m=1$ agitation mode around the epidemic threshold \cite{prasse2019time}.

Fifth, the dynamics on two different networks $\hat{A}\neq A$ is exactly the same only if the networks satisfy~(\ref{eq:A_Ahat_prediction_iff}). Based on numerical simulations, we showed that (\ref{eq:A_Ahat_prediction_iff}) holds approximately because the nodal state is barely agitated. We believe that the combination of the proper orthogonal decomposition (\ref{x_t_low_rank}) and equation (\ref{eq:A_Ahat_prediction_iff}) is a starting point for a further theoretical analysis on relating network structure and dynamics.

\section*{Acknowledgements}

We are grateful to Prejaas Tewarie for providing data on the structural brain network.

\appendix

\section{Network Reconstruction Algorithm: Details and Bayesian Interpretation}\label{appendix:lasso}
Subsection~\ref{subsec:pseudocode} states the reconstruction algorithm, which is an adaptation of the method we proposed in \cite{prasse2019gemf} for discrete-time epidemic models. The interpretation of the LASSO (\ref{eq:lasso}) as a Bayesian estimation problem is given in Subsection~\ref{subsec:bayesian}.

 \subsection{Details of the Network Reconstruction Algorithm} \label{subsec:pseudocode}
 
The solution $\hat{a}_{i1}(\rho_i), ..., \hat{a}_{iN}(\rho_i)$ to the LASSO (\ref{eq:lasso}) depends on the regularisation parameter $\rho_i$. We aim to choose the parameter $\rho_i$ that results in the solution $\hat{a}_{i1}(\rho_i), ..., \hat{a}_{iN}(\rho_i)$ with the greatest prediction accuracy. To assess the prediction accuracy, we apply \textit{hold-out cross-validation} \cite{bergmeir2012use}: We divide the nodal state observations into a training set $x(0),  ..., x(n_\textrm{train} \Delta t)$ and a validation set $x( (n_\textrm{train}+1) \Delta t), ..., x(n \Delta t)$. The training set is used to obtain the solution $\hat{a}_{i1}(\rho_i), ..., \hat{a}_{iN}(\rho_i)$ in dependency of $\rho_i$, whose prediction accuracy is evaluated on the validation set. We choose the regularisation parameter $\rho_i$ with the greatest prediction accuracy on the validation set.
    
   More precisely, we define the training set as the first $80\%$ of the nodal state observations $x(0)$, $x(\Delta t)$, ..., $x(n_\textrm{train} \Delta t)$, where $n_\textrm{train} = \lceil 0.8 n \rceil$. We denote the $n_\textrm{train}\times 1$ training vector $V_{\textrm{train},i}$ as 
\begin{align*}
V_{\textrm{train},i} = \begin{pmatrix}
\dfrac{ x_i\left( \Delta t\right) - x_i\left( 0 \right) }{\Delta t} - f_i( 0 )\\
\vdots \\
\dfrac{ x_i\left( n_\textrm{train} \Delta t\right) - x_i\left( \left(n_\textrm{train}-1\right) \Delta t\right) }{\Delta t} - f_i\left( \left(n_\textrm{train}-1\right)\Delta t \right)
\end{pmatrix} 
\end{align*}
 and the $n_\textrm{train} \times N$ training matrix $F_{\textrm{train},i}$ as
\begin{multline*}
F_{\textrm{train},i} =\\
 \begin{pmatrix}
g( x_i( 0 ), x_1( 0 ) )& ... & g( x_i( 0 ), x_N( 0 ) ) \\
\vdots & \ddots & \vdots \\
g\left( x_i\left( \left(n_\textrm{train}-1\right) \Delta t\right), x_1\left( \left(n_\textrm{train}-1\right) \Delta t\right) \right)& ... & g\left( x_i\left( \left(n_\textrm{train}-1\right) \Delta t\right), x_N\left( \left(n_\textrm{train}-1\right) \Delta t\right)\right)
\end{pmatrix}.
\end{multline*} 
We denote the solution of the LASSO (\ref{eq:lasso}) by $\hat{a}_{i1}(\rho_i), ..., \hat{a}_{iN}(\rho_i)$, when $F_i$ and $V_i$ are replaced by $F_{\textrm{train},i}$ and $V_{\textrm{train},i}$, respectively. If an entry $\hat{a}_{ij}(\rho_i)$ of the LASSO solution is smaller than the threshold $0.01$, then we round off and set $\hat{a}_{ij}(\rho_i)=0$. The prediction error $\operatorname{MSE}(\rho_i)$ of $\rho_i$ on the validation set is defined as 
 \begin{align}\label{eq:MSE_def}
 \operatorname{MSE}(\rho_i) = \left\lVert V_{\textrm{valid},i} - F_{\textrm{valid},i}  \begin{pmatrix}
\hat{a}_{i1}(\rho_i)\\
\vdots\\
\hat{a}_{iN}(\rho_i)
\end{pmatrix} \right\rVert^2_2.
 \end{align}
Here, the $(n - n_\textrm{train})\times 1$ validation vector $V_{\textrm{valid},i}$ and the $(n - n_\textrm{train}) \times N$ validation matrix $F_{\textrm{valid},i}$ are defined by the nodal state observations $x(( n_\textrm{train} +1 )\Delta t), ..., x(n \Delta t)$, analogously to the training vector $V_{\textrm{train},i}$ and the training matrix $F_{\textrm{train},i}$

We iterate over a set $\Theta_i$, specified below, of predefined candidate values for $\rho_i$. Every candidate value $\rho_i \in \Theta_i$ results in a different prediction error $\operatorname{MSE}(\rho_i)$. We determine the final regularisation parameter $\rho_{\textrm{opt}, i}$ as the candidate value $\rho_{\textrm{opt}, i} \in \Theta_i$ with the minimal prediction error $\operatorname{MSE}(\rho_{\textrm{opt}, i})$. We obtain the final estimate $\hat{a}_{i1}, ..., \hat{a}_{iN}$ as the solution to the LASSO (\ref{eq:lasso}) with the regularisation parameter $\rho_{\textrm{opt}, i}$, using the matrix $F_i$ and vector $V_i$ from \textit{all} nodal state observations $x(0), ..., x(n\Delta t)$.
 
 We define the set $\Theta_i$ as 20 logarithmically equidistant candidate values as $\Theta_i=\{\rho_{\textrm{min}, i}, ..., \rho_{\textrm{max}, i}\}$. If $\rho_i> \rho_{\textrm{th}, i}$, where $\rho_{\textrm{th}, i} = 2 \lVert F^T_i V_i \rVert_\infty$, then \cite{kim2007interior} the solution to the LASSO (\ref{eq:lasso}) equals $a_{ij}=0$ for all nodes~$j$. Thus, we set the candidate values in the set $\Theta_i$ proportional to $\rho_{\textrm{th}, i}$. We define $\rho_{\textrm{min}, i}= 2 \lVert F^T_i V_i \rVert_\infty 10^{-6}$ and $\rho_{\textrm{max}, i}= 2 \lVert F^T_i V_i \rVert_\infty 10^{-2}$. The network reconstruction method is given by Algorithm~\ref{algorithm:reconstruction}.

\begin{algorithm}
	\caption{\texttt{Network reconstruction}}
	\begin{algorithmic}[1]
		\State \textbf{Input: }  nodal state time series $x(0), x(\Delta t), ..., x(n\Delta t)$
		\State \textbf{Output: } estimated adjacency matrix $\hat{A}$ with the elements $\hat{a}_{ij}$
		\For {$i=1, ..., N$}			
		\State $\rho_{\textrm{max}, i}\gets 2 \lVert F^T_i V_i \rVert_\infty 10^{-2}$
		\State $\rho_{\textrm{min}, i}\gets  ~10^{-4}\rho_{\textrm{max}, i}$
		\State $\Theta_i \gets$ 20 logarithmically equidistant values from $\rho_{\textrm{min}, i}$ to $\rho_{\textrm{max}, i}$				
		\For {$\rho_i \in \Theta_i$}			
		\State $\hat{a}_{i1}(\rho_i), ..., \hat{a}_{iN}(\rho_i) \gets$ solution to (\ref{eq:lasso}) on the training set $V_{\textrm{train},i}$ and $F_{\textrm{train},i}$		
		\State $\hat{a}_{ij}(\rho_i)\gets 0$ for all $\hat{a}_{i1}(\rho_i), ..., \hat{a}_{iN}(\rho_i)$ smaller than $0.01$	
		\State Compute $\operatorname{MSE}(\rho_i)$ by (\ref{eq:MSE_def}) on the validation set $V_{\textrm{valid},i}$ and $F_{\textrm{valid},i}$		
		\EndFor
		\State $\rho_{\textrm{opt}, i} \gets \underset{\rho_i \in \Theta_i}{\operatorname{arg min}}\operatorname{MSE}\left(\rho_i\right)$
	\State $(\hat{a}_{i1}, ..., \hat{a}_{iN} )\gets$ the solution to (\ref{eq:lasso}) for $\rho_i = \rho_{\textrm{opt}, i}$ on the whole data set $F_i, V_i$ 
	\State $\hat{a}_{ij}\gets 0$ for all $\hat{a}_{i1}, ..., \hat{a}_{iN}$ smaller than $0.01$	
	\EndFor
	\end{algorithmic}
	\label{algorithm:reconstruction}
\end{algorithm}

 \subsection{Interpretation as a Bayesian Estimation}\label{subsec:bayesian}

For every node $i$, we define the error $w_i( k\Delta t )$ of the first-order approximation (\ref{eq:derivative_apx}) of the derivative $d x_i(t)/dt$ at time $t=k\Delta t$, such that  
\begin{equation}\label{eq:apx_error_def}
x_i\left( (k+1) \Delta t\right) = x_i\left( k \Delta t\right) + \Delta t \left.\frac{dx_i(t)}{dt}\right|_{t = k\Delta t} + w_i( k\Delta t ).
\end{equation}
The approximation (\ref{eq:apx_error_def}) can be regarded as a nonlinear system in discrete time $k$ with random model errors $w_i( k\Delta t )$, which forms the basis for the Bayesian interpretation of the LASSO (\ref{eq:lasso}). Furthermore, we rely on two assumptions.

\begin{assumption} \label{assumption:gaussian_model_errors}
For every node $i$ at every time $k$, the approximation error $w_i( k\Delta t )$ follows the normal distribution $\mathcal{N}\left(0, \sigma^2_w\right)$ with zero mean and variance $\sigma^2_w$. Furthermore, the approximation errors $w_i( k\Delta t )$ are stochastically independent and identically distributed at all times $k=1, ..., n$ and for all nodes $i$.
\end{assumption}
The exact model error $w_i( k\Delta t )$ is difficult to analyse, since $w_i( k\Delta t )$ is determined by higher-order derivatives of the nodal state $x(t)$. In contrast, assuming that the model error $w_i( k\Delta t )$ follows a Gaussian distribution $\mathcal{N}\left(0, \sigma^2_w\right)$ allows for a simple analysis. Furthermore, Assumption~\ref{assumption:gaussian_model_errors} stems from the \textit{maximum entropy principle} \cite{jaynes1957information}: Given a set of constraints on a probability distribution (e.g., specified mean), assume the \quotes{least informative} distribution, i.e., the distribution with maximum entropy that satisfies those constraint. Among all distributions on $\mathbb{R}$ with zero mean and variance $\sigma^2_w$, the Gaussian distribution $\mathcal{N}\left(0, \sigma^2_w\right)$ has the maximum entropy \cite{papoulis2002probability}.

\begin{assumption} \label{assumption:exponential_degree_distribution}
The adjacency matrix $A$ with non-negative elements $a_{ij}\ge 0$ follows the prior distribution
\begin{align}\label{eq:exp_distribution}
\operatorname{Pr} \left[ A \right] = \alpha \exp\left( -\sum^N_{i=1}  \sum^N_{j=1} a_{ij} \right),
\end{align}
where the normalisation constant $\alpha$ is set such that
\begin{align*}
 \int_{\mathbb{R}^{N\times N}_{\ge 0}} \operatorname{Pr} \left[ A \right] dA = 1.
\end{align*}
Furthermore, the matrix $A$ and the initial nodal state $x(0)$ are stochastically independent.
\end{assumption}
Clearly, there are more suitable random graph models for real-world networks than the exponential degree distribution in Assumption~\ref{assumption:exponential_degree_distribution}. In particular, the degree distribution of many real-world networks follows a power-law \cite{barabasi2016network}. However, the central result in this work is given by the juxtaposition of Figure~\ref{fig:dynamics_prediction} with Figure~\ref{fig:in_degree_distribution}: It is not necessary to accurately reconstruct the degree distribution to predict the dynamics on a network. Thus, even with the potentially imprecise assumption on the degree distribution (\ref{eq:exp_distribution}), it is possible to accurately predict the dynamics on the network. 

Proposition~\ref{proposition:bayesian_estimation} states the Bayesian interpretation of the LASSO (\ref{eq:lasso}). We emphasise that Proposition~\ref{proposition:bayesian_estimation} is not novel and follows standard arguments in parameter estimation, see for instance \cite{aster2018parameter}. Furthermore, Tibshirani elaborated on the Bayesian interpretation of the LASSO in the seminal paper \cite{tibshirani1996regression}. Nevertheless, we believe that the presentation of Proposition~\ref{proposition:bayesian_estimation}, here in the context of network reconstruction, is valuable to the reader.

\begin{proposition} \label{proposition:bayesian_estimation}
Suppose that Assumption~\ref{assumption:gaussian_model_errors} and Assumption~\ref{assumption:exponential_degree_distribution} hold true and that the nodal state $x(t)$ follows (\ref{eq:apx_error_def}). Then, provided the regularisation parameter equals $\rho_i=2 \sigma^2_w/\Delta t^2$, the matrix $\hat{A}$, which is obtained by solving the LASSO (\ref{eq:lasso}) for every node~$i$, coincides with the Bayesian estimate:
\begin{align}\label{eq:MAP_estimation}
\hat{A} = ~ \underset{A}{\operatorname{arg max}} ~ \operatorname{Pr} \left[ A \big| x(0), ..., x(n\Delta t) \right].
\end{align}
\end{proposition}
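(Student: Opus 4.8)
The plan is to recognise (\ref{eq:MAP_estimation}) as a standard maximum-a-posteriori computation and to reduce it, row by row, to the LASSO (\ref{eq:lasso}). First I would invoke Bayes' rule to write $\operatorname{Pr}[A \mid x(0), ..., x(n\Delta t)] \propto \operatorname{Pr}[x(0), ..., x(n\Delta t) \mid A]\operatorname{Pr}[A]$, where the omitted proportionality factor is the evidence $\operatorname{Pr}[x(0), ..., x(n\Delta t)]$, which does not depend on $A$ and hence does not affect the maximiser. Taking logarithms turns the problem into maximising $\log\operatorname{Pr}[x(0), ..., x(n\Delta t)\mid A] + \log\operatorname{Pr}[A]$, and the prior term is immediate from (\ref{eq:exp_distribution}): up to the $A$-independent constant $\log\alpha$, it equals $-\sum_{i,j} a_{ij}$, valid on the support $a_{ij}\ge 0$, which will supply the nonnegativity constraint of the LASSO.

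The crux is to evaluate the likelihood. Using the independence of $A$ and $x(0)$ from Assumption~\ref{assumption:exponential_degree_distribution}, I would factor $\operatorname{Pr}[x(0), ..., x(n\Delta t)\mid A] = \operatorname{Pr}[x(0)]\prod_{k=0}^{n-1}\operatorname{Pr}[x((k+1)\Delta t)\mid x(k\Delta t), A]$, discarding $\operatorname{Pr}[x(0)]$ as $A$-free. For fixed past, (\ref{eq:apx_error_def}) expresses $x((k+1)\Delta t)$ as a deterministic function of $x(k\Delta t)$ and $A$ plus the error $w_i(k\Delta t)$, so each transition density equals the density of $w_i(k\Delta t)$, since the map $w\mapsto x$ is a coordinatewise translation with unit Jacobian. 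By Assumption~\ref{assumption:gaussian_model_errors} these errors are i.i.d.\ $\mathcal{N}(0,\sigma_w^2)$ across $i$ and $k$, giving $\log\operatorname{Pr}[x(0), ..., x(n\Delta t)\mid A] = \text{const} - \tfrac{1}{2\sigma_w^2}\sum_{i=1}^N\sum_{k=0}^{n-1} w_i(k\Delta t)^2$. The decisive algebraic step is the identity $w_i(k\Delta t) = \Delta t\,[V_i - F_i (a_{i1}, ..., a_{iN})^T]_{k+1}$: comparing the definitions (\ref{eq:V_i_def}) and (\ref{eq:F_i_def}) with the model error (\ref{eq:apx_error_def}) and using the linearity of $dx_i/dt$ in the $a_{ij}$ shows that the $(k{+}1)$-th residual component is exactly $w_i(k\Delta t)/\Delta t$. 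Hence $\sum_k w_i(k\Delta t)^2 = \Delta t^2\,\lVert V_i - F_i (a_{i1}, ..., a_{iN})^T\rVert_2^2$.

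Substituting, the log-posterior becomes, up to $A$-independent constants, $-\tfrac{\Delta t^2}{2\sigma_w^2}\sum_i \lVert V_i - F_i (a_{i1}, ..., a_{iN})^T\rVert_2^2 - \sum_i\sum_j a_{ij}$, which separates across rows $i$. Maximising over $A$ subject to $a_{ij}\ge0$ is therefore equivalent, for each $i$ independently, to minimising $\tfrac{\Delta t^2}{2\sigma_w^2}\lVert V_i - F_i(a_{i1},...,a_{iN})^T\rVert_2^2 + \sum_j a_{ij}$; multiplying by the positive constant $2\sigma_w^2/\Delta t^2$ leaves the minimiser unchanged and yields precisely the LASSO objective (\ref{eq:lasso}) with $\rho_i = 2\sigma_w^2/\Delta t^2$. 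I expect the main obstacle to be the likelihood computation rather than this final matching: one must argue carefully that conditioning the recursion (\ref{eq:apx_error_def}) on the observed past makes each transition density a unit-Jacobian translate of the Gaussian error density, and keep track of the single factor of $\Delta t$ in the residual identity, since it is exactly this factor that fixes the stated value of $\rho_i$.
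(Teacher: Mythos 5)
Your proposal is correct and follows essentially the same route as the paper's proof: factor the posterior into the exponential prior and the product of Gaussian transition densities, identify the residual $w_i(k\Delta t) = \Delta t\,[V_i - F_i(a_{i1},\dots,a_{iN})^T]_{k+1}$, separate per node, and rescale by $2\sigma_w^2/\Delta t^2$ to read off $\rho_i$. The only difference is that you spell out the Bayes'-rule and chain-rule factorisation that the paper delegates to a citation, which is a welcome addition rather than a deviation.
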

\begin{proof}
Analogous steps to the derivations in \cite{prasse2018exact} yield that (\ref{eq:MAP_estimation}) is equivalent to
\begin{align*}
\hat{A} = ~ \underset{A}{\operatorname{arg max}} ~ \log\left( \operatorname{Pr} \left[ A \right] \right)+\sum^{n-1}_{k=0} \log\left( \operatorname{Pr} \left[ x((k+1)\Delta t)  \big| x( k\Delta t), A\right]\right).
\end{align*}
Under Assumption~\ref{assumption:gaussian_model_errors}, the errors $w_i( k\Delta t )$ are independent for different nodes $i$. Thus, we obtain that 
\begin{align}\label{eq:MAP_derivation_1}
\hat{A} = ~ \underset{A}{\operatorname{arg max}} ~ \log\left( \operatorname{Pr} \left[ A \right] \right)+\sum^{n-1}_{k=0} \sum^N_{i=1} \log\left( \operatorname{Pr} \left[ x_i((k+1)\Delta t)  \big| x( k\Delta t), A\right]\right).
\end{align}
The probability $\operatorname{Pr} \left[ x_i((k+1)\Delta t)  \big| x( k\Delta t), A\right]$ is determined by the distribution of the error $w_i( k\Delta t )$. From (\ref{eq:interaction_model}) and (\ref{eq:apx_error_def}), it follows that 
\begin{align*}
w_i( k\Delta t ) = x_i((k+1)\Delta t) - x_i( k\Delta t ) - \Delta t \left( f_i\left( x_i(k\Delta t) \right) + \sum^N_{j=1} a_{ij} g\left(x_i(k\Delta t), x_j(k\Delta t)\right)\right).
\end{align*}
With the definition of the vector $V_i$ and the matrix $F_i$ in (\ref{eq:V_i_def}) and (\ref{eq:F_i_def}), respectively, we obtain that
\begin{align*}
w_i( (k-1)\Delta t ) = \Delta t \left( V_i \right)_k - \Delta t \sum^N_{j=1} \left( F_i\right)_{kj} a_{ij}.
\end{align*}
Hence, under Assumption~\ref{assumption:exponential_degree_distribution} on the prior $\operatorname{Pr} \left[ A \right]$, the optimisation problem (\ref{eq:MAP_derivation_1}) becomes
\begin{align*}
 \begin{aligned}\hat{A} = ~ &\underset{A}{\operatorname{arg max}} ~ & & \log\left(\alpha \right) - \sum^N_{i=1}  \sum^N_{j=1} a_{ij} &\\
&&&+\sum^N_{i=1} \sum^{n}_{k=1}  \log\left( \operatorname{Pr} \left[ w_i( (k-1)\Delta t ) = \Delta t \left( V_i \right)_k  - \Delta t \sum^N_{j=1} \left( F_i\right)_{kj} a_{ij} \right]\right)& \\
 &\text{s.t.} & & a_{ij} \ge 0 \quad i,j=1, ..., N. &
\end{aligned} 
\end{align*}
The term $\log\left(\alpha \right)$ is constant with respect to the matrix $A$ and can be omitted. Furthermore, the optimisation can be carried out independently for every node $i$, which yields that
\begin{align*}
 \begin{aligned}&\underset{a_{i1}, ..., a_{iN}}{\operatorname{max}} ~ & &\sum^{n}_{k=1}  \log\left( \operatorname{Pr} \left[ w_i( (k-1)\Delta t ) = \Delta t \left( V_i \right)_k  - \Delta t \sum^N_{j=1} \left( F_i\right)_{kj} a_{ij} \right]\right) - \sum^N_{j=1} a_{ij} &\\
 &\text{s.t.} & & a_{ij} \ge 0 \quad j=1, ..., N. &
\end{aligned} 
\end{align*}
Under Assumption~\ref{assumption:gaussian_model_errors}, the errors $w_i( k\Delta t )$ follow a Gaussian distribution, which results in the minimisation problem 
\begin{align*}
 \begin{aligned}&\underset{a_{i1}, ..., a_{iN}}{\operatorname{min}} ~ & & \sum^{n}_{k=1}\log(\sqrt{2\pi} \sigma_w)+ \frac{1}{2 \sigma^2_w}
\left( \Delta t \left( V_i \right)_k - \Delta t \sum^N_{j=1} \left( F_i\right)_{kj} a_{ij}  \right)^2 + \sum^N_{j=1} a_{ij} &\\
 &\text{s.t.} & & a_{ij} \ge 0 \quad j=1, ..., N. &
\end{aligned}
\end{align*}
Omitting the constant term $\log(\sqrt{2\pi} \sigma_w)$ and multiplying with $2\sigma^2_w/\Delta t^2$ gives 
\begin{align*}
\begin{aligned}&\underset{a_{i1}, ..., a_{iN}}{\operatorname{min}} ~ & & \sum^{n}_{k=1}
\left(\left( V_i \right)_k - \sum^N_{j=1} \left( F_i\right)_{kj} a_{ij} \right)^2 + 2 \frac{\sigma^2_w}{\Delta t^2}\sum^N_{j=1} a_{ij} &\\
 &\text{s.t.} & & a_{ij} \ge 0 \quad j=1, ..., N. &
\end{aligned}
\end{align*}
By identifying $\rho_i=2 \sigma^2_w/\Delta t^2$, we obtain the LASSO (\ref{eq:lasso}), which completes the proof.
\end{proof}

\section{Details on the Empirical Networks and Model Parameters}\label{appendix:empirical_networks_model_parameters}
Here, we provide details on the empirical networks and the parameters for the respective network dynamics in Section~\ref{sec:dynmical_models}. For every network topology, we obtain the link weights $a_{ij}$ as follows. If there is a link from node $j$ to node $i$, then we set the element $a_{ij}$ to a uniformly distributed random number in $[0.5,1.5]$. If there is no link from node $j$ to node $i$, then we set the respective element to $a_{ij}=0$.

\subsection{Lotka-Volterra Population Dynamics} 
 For the competitive population dynamics described by the Lotka-Volterra equations, we consider the \textit{Little Rock Lake} network \cite{martinez1991artifacts}, which we accessed via the \textit{Konect} network collection \cite{kunegis2013konect}. The asymmetric and connected network consists of $N=183$ nodes, which correspond to different species. There are $L = 2494$ directed links which specify the predation of one species upon another. 
 
  For every species $i$, we set the growth parameters $\alpha_i$ and $\theta_i$ to uniformly distributed random numbers in $[0.5,1.5]$. Furthermore, we set the the initial nodal state $x_i(0)$ to a uniformly distributed random number in $[0,1]$ for every species $i$. We set the maximum prediction time to $T_\textrm{max}=5$.
        
\subsection{Mutualistic Population Dynamics}
Kato \textit{et al.} \cite{kato1990insect} studied the relationship between $679$ insect species and $91$ plants in a beech forest in Kyoto by specifying which insects pollinate or disperse which plant. We accessed the insect-plant network via the supplementary data in \cite{rezende2007non}. The insect-plant network determines a mutualistic insect-insect network \cite{harush2017dynamic}: If two insect species $i$ and $j$ pollinate or disperse the same plant, then both insect species $i$ and $j$ contribute to, and benefit from, the abundance of the plant. Thus, if two insect species~$i$ and $j$ are linked to the same plant, then we set $a_{ij}$ to a uniformly distributed random number in $[0.5,1.5]$, and $a_{ij}=0$ otherwise. As a result, we obtain a symmetric and disconnected network with $N=679$ nodes and $L = 30,905$ links.

  For every species $i$, we set the growth parameters $\alpha_i$ and $\theta_i$ to a uniformly distributed random number in $[0.5,1.5]$. Furthermore, we set the the initial nodal state $x_i(0)$ to a uniformly distributed random number in $[0,20]$ for every species $i$. We set the maximum prediction time to $T_\textrm{max}=0.025$.

\subsection{Michaelis-Menten Regulatory Dynamics}
We consider the transcription interactions between regulatory genes in the yeast \textit{S. Cerevisiae} \cite{milo2002network}. The asymmetric and disconnected network has $N = 620$ and $L = 869$ links. The influence from gene~$j$ to gene $i$ is in either an activation or inhibition regulation. Since the activator interactions account for more than $80\%$ of the links between genes, we only consider activation interactions, see also \cite{barzel2013universality}. In line with Harush and Barzel \cite{harush2017dynamic}, we consider degree avert regulatory dynamics by setting the Hill coefficient to $h=2$. We set the the initial nodal state $x_i(0)$ to a uniformly distributed random number in $[0,2]$ for every node $i$. We set the maximum prediction time to $T_\textrm{max}=3$.

\subsection{Susceptible-Infected-Susceptible Epidemics}
The SIS contagion dynamics are evaluated on the contact network of the \textit{Infectious: Stay Away} exhibition \cite{isella2011s} between $N=410$ individuals, accessed via \cite{kunegis2013konect}. The connected and symmetric network has $L = 5530$ links. A link between two nodes $i$,$j$ indicates that the respective two individuals had a face-to-face contact that lasted for at least 20 seconds.

A crucial quantity for the SIS dynamics is the basic reproduction number $R_0$, which is defined as~\cite{van2002reproduction} 
\begin{align}\label{eq:R_0_def}
R_0 = \rho\left( \operatorname{diag}\left(\delta_1, ..., \delta_N\right)^{-1} B \right).
\end{align}
Here, the spectral radius of an $N \times N$ matrix $M$ is denoted by $\rho(M)$, and $\operatorname{diag}\left(\delta_1, ..., \delta_N\right)$ denotes the $N\times N$ diagonal matrix with the curing rates $\delta_1, ..., \delta_N$ on its diagonal. If the basic reproduction number $R_0$ is less than or equal to 1, then the epidemic dies out \cite{lajmanovich1976deterministic}, i.e., $x(t)\rightarrow 0$ as $t\rightarrow \infty$. We would like to study the spread of a virus that does not die out, and we aim to set the basic reproduction number to $R_0=1.5$: First, we set the \quotes{initial curing rate} $\delta^{(0)}_i$ to a uniformly distributed random number in $[0.5, 1.5]$ for every node $i$. Then, we set the curing rates to $\delta_i=c \delta^{(0)}_i$, where the multiplicity~$c$ is chosen such that the basic reproduction number in (\ref{eq:R_0_def}) equals $R_0=1.5$. We set the the initial nodal state $x_i(0)$ to a uniformly distributed random number in $[0,0.1]$ for every node $i$. Furthermore, we set the maximum prediction time to $T_\textrm{max}=0.5$.

\subsection{Kuramoto Oscillators}

We consider Kuramoto oscillator dynamics on the structural human brain network \cite{breakspear2010generative} of size $N = 78$. Every node corresponds to a brain region of the automated anatomical labelling (AAL) atlas \cite{tzourio2002automated}. The structural brain network specifies the anatomical connectivity between regions, i.e., the physical connections between regions based on white matter tracts. White matter tracts were estimated using fibre tracking from diffusion MRI data from the \textit{Human Connectome Project} \cite{van2013wu} as outlined in \cite{tewarie2019spatially}. The network is symmetric and has $L = 696$ links.

For every node $i$, we set the natural frequency $\omega_i$ to a normally distributed random number with zero mean and standard deviation $0.1 \pi$. Furthermore, we set the the initial nodal state $x_i(0)$ to a uniformly distributed random number in $[-\pi/4, \pi/4]$ for every node $i$. We set the maximum prediction time to $T_\textrm{max}=1$.

\subsection{Cowan-Wilson Neural Firing}
We consider the modified Cowan-Wilson neural firing model of Laurence \textit{et al.} \cite{laurence2019spectral} on the neuronal connectivity of the adult \textit{Caenorhabditis elegans} hermaphrodite worm. Originally, White \textit{et al.} \cite{white1986structure} compiled the neuronal connectivity of C. elegans. In \cite{chen2006wiring, varshney2011structural}, the neural wiring was updated, which we accessed online via the \textit{Wormatlas} online database\footnote{Under the link: http://www.wormatlas.org/neuronalwiring.html}. The somatic nervous system has $N=282$ neurons and $L = 2994$ synapses. A link from node $j$ to node $i$ indicates the presence of at least one synapse from neuron $j$ to neuron $i$. 

The slope and the threshold of the neural activation functions are set to $\tau =1$ and $\mu=1$, respectively. The initial state $x_i(0)$ of every node $i$ is set to a uniformly distributed random number in $[0,10]$. We set the maximum prediction time to $T_\textrm{max}=4$.

\section{Ill-Conditioning of the Network Reconstruction} \label{appendix:ill_condition_matrix_F}

We argue that if the proper orthogonal decomposition (\ref{x_t_low_rank}) is accurate, then the matrix $F_i$ in (\ref{eq:F_i_def}) is ill-conditioned. We rewrite the matrix $F_i$ as
\begin{align*}
F_i = \begin{pmatrix}
r^T_i(x\left(0\right))\\
\vdots \\
r^T_i(x\left((n-1)\Delta t\right))
\end{pmatrix},
\end{align*}
where the rows are given by $N\times 1$ vectors
\begin{align}\label{r_function_def}
r_i\left( x\left( k \Delta t \right) \right) = \begin{pmatrix}
g\left(x_i\left( k \Delta t \right), x_1\left( k \Delta t \right)\right) \\
\vdots \\
g\left(x_i\left( k \Delta t \right), x_N\left( k \Delta t \right) \right) 
\end{pmatrix}.
\end{align}
We aim to show that the approximation of the nodal state $x(t)$ in (\ref{x_t_low_rank}) implies that the row vectors $r_i(x(k\Delta t))$, where $k=0, 1, ..., n-1$, can be approximated by the linear combination of only few vectors, which implies the ill-conditioning of the matrix $F_i$. To shorten the notation, we drop the time index $t$ in this section. More precisely, we formally replace the nodal state $x(t)$ by $x$ and the functions $g\left( x_i(t), x_j(t)\right)$ and $r_i(x(t))$ by $g\left( x_i, x_j \right)$ and $r_i(x)$, respectively. 

To analyse the nonlinear dependency of the rows $r_i(x)$ on the nodal state $x$, we resort to a Taylor expansion of the rows $r_i(x)$. The function $r_i:\mathbb{R}^N \rightarrow \mathbb{R}^N$ is specified by the nonlinear interaction function $g$ of the dynamical model (\ref{eq:interaction_model}). The Taylor expansion of the function $g(x_i,x_j)$ around the point $x_i=x_j=0$ reads
\begin{align}\label{oujnkjbsdfs}
g\left( x_i, x_j \right)= g(0, 0) + \sum^\infty_{k=1} \sum_{\alpha+\beta = k} \frac{1}{\alpha! \beta!} x^\alpha_i x^\beta_j \left. \frac{\partial^k g \left( x_i, x_j \right)}{\partial x^\alpha_i \partial x^\beta_j}\right|_{x_i=x_j=0} . 
\end{align}
We define the coefficient $\eta\left( \alpha, \beta \right)$ as
\begin{align}\label{eta_def}
\eta\left( \alpha, \beta \right) = \frac{1}{\alpha! \beta!} \left.\frac{\partial^k g \left( x_i, x_j \right)}{\partial x^\alpha_i \partial x^\beta_j}\right|_{x_i= x_j=0}.
\end{align}
The indices $i$ and $j$ refer to the first and second argument of the function $g(x_i, x_j)$. Thus, the coefficient $\eta\left( \alpha, \beta \right)$ does not depend on the value of the indices $i$, $j$. With (\ref{eta_def}), it follows from (\ref{oujnkjbsdfs}) that
\begin{align} \label{sefsdfsdfsdf}
g\left( x_i, x_j \right)= g(0, 0) + \sum^\infty_{k=1} \sum^k_{\alpha=0} \eta\left( \alpha, k-\alpha \right)  x^\alpha_i x^{k-\alpha}_j . 
\end{align}
With (\ref{sefsdfsdfsdf}), we obtain the Taylor series of the function $r_i$ in (\ref{r_function_def}) as
\begin{align} \label{r_taylor}
r_i(x) = r_i(0) + \sum^\infty_{k=1} \sum^k_{\alpha=0} \eta\left( \alpha, k-\alpha \right)  x^\alpha_i x^{k-\alpha},
\end{align}
where we denote the element-wise power of the vector $x$ as
\begin{align*} 
x^{k-\alpha} = \left( x^{k-\alpha}_1, ..., x^{k-\alpha}_N \right)^T.
\end{align*}
We define the truncation of the series (\ref{r_taylor}) until the $q$-th power as
\begin{align}\label{r_qi_def}
r_{q, i}\left( x \right) = r_i(0) + \sum^{q}_{k=1} \sum^k_{\alpha=0} \eta\left( \alpha, k-\alpha \right)  x^\alpha_i x^{k-\alpha}.
\end{align}
For a sufficiently great power $q$, the matrix $F_i$ is approximated by $F_i\approx F_{q, i}$, where we define the matrix $F_{q, i}$ with the truncations~$r_{q,i}(x)$ as 
\begin{align*}
F_{q, i} = \begin{pmatrix}
r^T_{q,i}(x\left(0\right))\\
\vdots \\
r^T_{q,i}(x\left((n-1)\Delta t\right))
\end{pmatrix}.
\end{align*}
In fact, if the interaction function $g(x_i, x_j)$ is a polynomial of degree $q$, then the matrices $F_i$ and $F_{q,i}$ coincide, i.e., $F_i=F_{q, i}$. For instance, it holds that $F_i = F_{2, i}$ for the SIS epidemic process whose interaction function equals $g(x_i,x_j)=(1-x_i)x_j$. The low agitation of the nodal state vector $x$ in (\ref{x_t_low_rank}) indeed explains the ill-conditioning of the matrix $F_i$:
\begin{proposition}\label{proposition:rank}
Suppose that the $N\times 1$ nodal state vector $x(t)$ equals the linear combination of $m$ vectors~$y_{p}$ at every time $t$, 
\begin{align} \label{kjnkdsd}
x(t) = \sum^m_{p=1}c_{p}(t)y_{p}
\end{align}
for some scalars $c_{p}(t) \in \mathbb{R}$. Then, the rank of the matrix $F_{q,i}$ is bounded by
\begin{align*} 
\operatorname{rank}\left( F_{q,i} \right) \le\sum^q_{\beta=0}{{ \beta + m - 1}\choose{m - 1}}.
\end{align*}
\end{proposition}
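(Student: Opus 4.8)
The plan is to show that every row of $F_{q,i}$ lies in one common subspace of $\mathbb{R}^N$ whose dimension equals the claimed bound; since the rank of $F_{q,i}$ equals its row rank, this immediately gives the inequality. The essential mechanism is that, once the nodal state is constrained to the $m$-dimensional form~(\ref{kjnkdsd}), the element-wise powers of $x$ appearing in $r_{q,i}(x)$ can only span a \emph{fixed}, time-independent collection of vectors built from the agitation modes $y_p$.

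First I would reindex the truncated series~(\ref{r_qi_def}) by the pair $(\alpha, \beta)$ with $\beta = k - \alpha$, writing
\begin{align*}
r_{q,i}(x) = r_i(0) + \sum_{\substack{\alpha, \beta \ge 0 \\ 1 \le \alpha + \beta \le q}} \eta(\alpha, \beta)\, x_i^\alpha\, x^\beta,
\end{align*}
where $x^\beta = (x_1^\beta, \ldots, x_N^\beta)^T$ is the element-wise power, $x_i^\alpha$ is a scalar, and $\eta(\alpha,\beta)$ depends neither on $i$ nor on time. The constant term satisfies $r_i(0) = g(0,0)\,\mathbf{1}$ by~(\ref{r_function_def}), a multiple of the all-ones vector $x^0$. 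Thus every contributing index $\beta$ runs over $\{0, 1, \ldots, q\}$.

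Next I would substitute~(\ref{kjnkdsd}) componentwise and expand with the multinomial theorem. Writing $x_j = \sum_{p=1}^m c_p (y_p)_j$, one has
\begin{align*}
x_j^\beta = \sum_{\beta_1 + \cdots + \beta_m = \beta} \binom{\beta}{\beta_1, \ldots, \beta_m} \left(\prod_{p=1}^m c_p^{\beta_p}\right) \prod_{p=1}^m (y_p)_j^{\beta_p}.
\end{align*}
Collecting components shows that $x^\beta$ is a linear combination, with time-dependent scalar coefficients, of the fixed vectors $z_{\beta_1, \ldots, \beta_m} \in \mathbb{R}^N$ whose $j$-th entry is $\prod_{p=1}^m (y_p)_j^{\beta_p}$, indexed by the $m$-tuples summing to $\beta$. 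Crucially each $z_{\beta_1, \ldots, \beta_m}$ depends only on the modes $y_p$, not on $t$; multiplying by the scalar $x_i^\alpha$ stays within this span. Hence each row of $F_{q,i}$ lies in the single subspace $\mathcal{W} = \operatorname{span}\left\{ z_{\beta_1, \ldots, \beta_m} : 0 \le \beta_1 + \cdots + \beta_m \le q \right\}$. Counting the spanning vectors by degree gives $\binom{\beta + m - 1}{m - 1}$ tuples for each fixed total degree $\beta$ via stars and bars, and summing over $\beta = 0, \ldots, q$ bounds $\dim \mathcal{W}$, and therefore $\operatorname{rank}(F_{q,i})$, by the stated expression.

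The only genuine obstacle is the bookkeeping around the element-wise (rather than scalar) power in~(\ref{r_taylor}): one must check carefully that the multinomial expansion, applied to each component simultaneously, collapses into the time-independent vectors $z_{\beta_1,\ldots,\beta_m}$ with purely scalar time dependence, so that the time variation of $x(t)$ never enlarges the spanning set. Once that is verified, the degree-by-degree counting is a routine stars-and-bars argument.
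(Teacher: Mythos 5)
Your proposal is correct and follows essentially the same route as the paper's proof: reduce each row $r_{q,i}(x)$ to a linear combination of the element-wise powers $x^\beta$ with scalar coefficients, expand each $x^\beta$ via the multinomial theorem into Hadamard products of the agitation modes, and count those products by stars and bars. The only cosmetic difference is that you collect everything into one subspace $\mathcal{W}$ while the paper sums the dimensions of the sets $\{x^\beta\}$ degree by degree, which yields the identical bound.
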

\begin{proof}
It follows from (\ref{kjnkdsd}) that
\begin{align*}
x(t) \in \operatorname{span} \left( y_{1}, ..., y_{m} \right) 
\end{align*}
at every time $t$, where $\operatorname{span} \left( y_{1}, ..., y_{m} \right)$ denotes the span of the vectors $y_{1}$, ..., $y_m$. We rewrite the function $r_{q, i}\left( x \right)$ in (\ref{r_qi_def}) as
\begin{align*}
r_{q, i}\left( x \right) = r_i(0) + \sum^{q}_{k=1} \sum^k_{\beta=0} \eta\left( k-\beta, \beta \right)  x^{k-\beta}_i x^\beta.
\end{align*}
Both terms $\eta\left( k-\beta, \beta \right)$ and $x^{k-\beta}_i$ are scalars. Thus, we obtain that
\begin{align} \label{dfsdfsdfsdf}
r_{q, i}\left( x \right) = r_i(0) + \sum^q_{\beta=0} \mu_\beta(x) x^\beta
\end{align}
for some scalars $\mu_0(x), \mu_1(x), ..., \mu_q(x) \in \mathbb{R}$. Since the rows of the matrix $F_{q, i}$ are given by (\ref{dfsdfsdfsdf}) for some $x \in \operatorname{span} \left( y_{1}, ..., y_{m} \right)$, it holds that
\begin{align} \label{adfsdfsdfsss}
\operatorname{rank}\left( F_{q,i} \right) \le\sum^q_{\beta=0}\operatorname{dim}\left( \left\{ x^\beta \big| x \in \operatorname{span} \left( y_{1}, ..., y_{m} \right) \right\} \right).
\end{align}
We consider the addends in (\ref{adfsdfsdfsss}) separately. For any vector $x \in \operatorname{span} \left( y_{1}, ..., y_{m} \right)$, it holds that 
\begin{align*}
x^{\beta}_j = \left(\sum^{m}_{p=1}c_{p}\left(y_{p}\right)_j \right)^{\beta}
\end{align*}
for some scalars $c_{1}, ..., c_{m}$. The multinomial theorem yields that
\begin{align*}
x^{\beta}_j = \sum_{p_1+p_2+...+p_{m}=\beta} \frac{\beta!}{p_1! p_2!\cdots p_{m}!}\prod^{m}_{l=1}\left(c_{l} \left(y_{l}\right)_j\right)^{p_l}.
\end{align*}
We define the coefficients
\begin{align*}
\zeta\left( p_1, ..., p_{m} \right) = \frac{\beta!}{p_1! p_2!\cdots p_{m}!}\prod^{m}_{l=1}\left(c_{l}\right)^{p_l},
\end{align*}
which gives that
\begin{align} \label{adfsfsdf}
x^{\beta}_j = \sum_{p_1+p_2+...+p_{m}=\beta} \zeta\left( p_1, ..., p_{m} \right)\prod^{m}_{l=1} \left(y_{l}\right)^{p_l}_j.
\end{align}
By stacking (\ref{adfsfsdf}) for the entries $j=1, ..., N$, we obtain an expression for the vector $x^\beta$ as
\begin{align}\label{kjnkjnss}
x^{\beta} = \sum_{p_1+p_2+...+p_{m}=\beta} \zeta\left( p_1, ..., p_{m} \right)\nu\left( p_1, ..., p_{m} \right) .
\end{align}
Here, we defined the vectors
\begin{align*}
\nu\left( p_1, ..., p_{m} \right) = \left(y_{1}\right)^{p_1} \odot \left(y_{2}\right)^{p_2} \odot ... \odot\left(y_{m}\right)^{p_{m}} ,
\end{align*}
where $\odot$ denotes the Hadamard product, or element-wise product. From (\ref{kjnkjnss}), it follows that the vector~$x^\beta$ is a linear combination of all vectors $\nu\left( p_1, ..., p_{m} \right)$ with $p_1+p_2+...+p_{m}=\beta$, which yields that
\begin{align} \label{afsdfsdfsdfaaa}
\operatorname{dim}\left( \left\{ x^\beta \big| x \in \operatorname{span} \left( y_{1}, ..., y_{m} \right) \right\} \right) = {{ \beta + m - 1}\choose{m - 1}}.
\end{align}
We complete the proof by combining (\ref{adfsdfsdfsss}) and (\ref{afsdfsdfsdfaaa}).
\end{proof}
As an example, consider that the nodal state $x(t)$ of the SIS epidemic process is agitated in only $m=10$ agitation modes $y_{p}$. Then, since $F_i=F_{2, i}$ for the SIS epidemic process, Proposition~\ref{proposition:rank} states that 
\begin{align*}
\operatorname{rank}\left( F_i \right) = \operatorname{rank}\left( F_{2,i} \right) \le \sum^2_{\beta=0}{{ \beta + 10 - 1}\choose{10 - 1}},
\end{align*}
which yields that $\operatorname{rank}\left( F_i \right) \le 66$.

\end{document}